\newcommand{\push}{{\sc Push}\xspace}
\newcommand{\pull}{{\sc Pull}\xspace}
\newcommand{\pp}{{\sc Push\&Pull}\xspace}
\newtheorem{theorem}{Theorem}
\newtheorem{lemma}{Lemma}
\newtheorem{corollary}{Corollary}
\newcommand{\qed}{\hfill $\square$ \smallbreak}
\newenvironment{proof}{\noindent\textbf{Proof:}}{\qed}
\begin{document}


%
\title{Randomized Rumor Spreading in Ad Hoc Networks with Buffers\thanks{This manuscript was submitted to DISC 2013.}}
%
%
\author{Dariusz R. Kowalski%
	\thanks{Department of Computer Science, University of Liverpool, Ashton Building, Ashton Street, Liverpool L69 3BX, UK.
		Supported by the Engineering and Physical Sciences Research Council  [grant number EP/G023018/1].} 
\and Christopher Thraves Caro
\thanks{GSyC, Universidad Rey Juan Carlos,
	Campus de Fuenlabrada,
	Camino del Molino~S/N,
	28943 Fuenlabrada, Madrid,
	Spain.
	Supported by Spanish MICINN grant Juan de la Cierva, Comunidad de Madrid grant S2009TIC-1692
and Spanish MICINN grant TIN2008--06735-C02-01.}
}
%
%
%

\date{21 May 2013}

\maketitle              

\begin{abstract}
The randomized rumor spreading problem generates a big interest in the area of distributed algorithms due to its simplicity, robustness and wide range of applications. 
The two most popular
communication paradigms used for spreading the rumor are \push and \pull algorithms. 
The former protocol allows nodes to send the rumor to a randomly selected neighbor at each step, 
while the latter is based on sending a request and downloading the rumor from a randomly selected
neighbor, provided the neighbor has it.
Previous analysis of these protocols assumed that every node could process all such push/pull
operations within a single step, which could be unrealistic in practical situations.
Therefore we propose a new framework for analysis rumor spreading
accommodating buffers, in which a node can process only one push/pull message or push request at a time.
We develop upper and lower bounds for randomized rumor spreading time in the new framework,
and compare the results with analogous in the old framework without buffers.
\\

\noindent
{\bf Keywords:} Randomized rumor spreading, model with buffers, synchronous model, asynchronous model, push and pull protocols.
\end{abstract}
%


\newpage

\section{Introduction}\label{sec:intro}

Rumor spreading is a fundamental concept in ad hoc communication, databases and systems:
a rumor, initially stored in one node, needs to be delivered to all other nodes
in the network by passing it along available links.
\emph{Randomized and distributed} spreading algorithms 
are of special interest due to their simplicity, robustness and locality 
\cite{Karp:2000:RRS:795666.796561,DBLP:journals/rsa/FeigePRU90}.
A well known randomized spreading algorithm proposed in the literature  \cite{Demers1988} is called \push. \push algorithm is executed in each node and it works in 
steps. At each step: if a node holds the rumor, it chooses 
uniformly at random one neighbor to establish a point-to-point communication. If the chosen neighbor does not hold the rumor, the rumor is \emph{pushed} through 
and the neighbor becomes a holder of the rumor at the same step. If a node does not hold the rumor, it waits until it becomes a holder. 
\push algorithm has 
its counterpart
known as \pull spreading algorithm. \pull  algorithm is executed in each node and it also works in 
steps. At each step: if a node does not hold the rumor, it chooses 
uniformly at random one neighbor to establish a point-to-point communication. If the chosen neighbor holds the rumor, the rumor is \emph{pulled} through 
and the node becomes a holder of the rumor at the same step. If a node holds the rumor, it waits until every node holds the rumor (while neighbors may \emph{pull} the rumor from it). 
A third well studied randomized algorithm is a combination of 
\push and \pull algorithms. It is called \pp and it works as follows: at each step, every node that holds the rumor acts according to \push algorithm, and 
every node that does not hold the rumor acts according to \pull algorithm.  
Some appealing applications of randomized rumor spreading are data aggregation \cite{1638541}, 
maintenance of replicated databases \cite{Demers1988},  
resource discovery \cite{Harchol-Balter1999} and failure detector \cite{vanRenesse:2009:GFD:1659232.1659238}, among others. 

The classical framework that was used for studying the behavior of \push, \pull and \pp protocols assumes discrete time running in synchronized time steps for all nodes. 
At each time step, every node can establish many parallel point-to-point communications if many neighbors request one via 
\push or
\pull. Hence, it is implicitly 
assumed that, at each step, every node is able to process every received 
\push message and
\pull request. Therefore, if a node that holds the rumor is contacted 
by many neighbors at the same time step trying to 
	\emph{push} or
\emph{pull} the rumor, the classical framework allows this node to 
process all the \push messages and/or
answer all the \pull requests in the same time step.   
%
Such an assumption makes that some important phenomena 
occurring in practical situations
may not be
captured. 
For instance, if the graph consists in one central node connected to $n$ leaves and the spreading algorithm is \pp, the results proved in 
\cite{DBLP:conf/stacs/Giakkoupis11} say that the rumor is spread in at most $O(\log n)$ time steps. While, even if the central node can send $n$ messages at the same time, 
it 
may need in practice
$n$ steps 
to process all the messages 
(\push messages and/or \pull requests)
sent by the $n$ leaves. Therefore, if the source node is one of the leaves, in the worst case, the central node 
may learn
the rumor 
after $n$ steps (after processing all the $n$ messages sent by the leaves). 
This
fact 
is clearly not captured by the analysis done 
so far in the previous work on randomized rumor spreading, which assumed
such parallelism when 
processing \push messages and
answering \pull requests. 

\paragraph{Our contributions.}
In order to address the aforementioned phenomena,
we introduce a new framework to study rumor spreading. 
In the literature, we find the \emph{phone call} model used to study gossiping problem \cite{DBLP:journals/siamcomp/KrummeCV92,DBLP:journals/dam/ChlebusDP94}. 
The model we introduce in this document can be seen as the phone call model where every node has a \emph{telephone answering machine} (the buffer). 
Therefore, if a node receives a call while it is already in a different call, the buffer takes the message. Then, the node  can check the buffer making a \emph{call} and it writes down 
the message in the local memory. In the model we introduce, a node is allowed to perform in one time step: a call to one neighbor, 
and
a call to the the buffer to take note of one message recorded in the buffer. 
If the buffer is empty at some step, one incoming call goes directly to the node and the rest incoming calls (if they exist) go to the buffer (the call that goes directly to the node is chosen uniformly at random). 
When the buffer is not empty, since the node is calling to the buffer to take note of a message, every incoming call goes directly to the buffer. 

We prove similarities and, most importantly, crucial differences 
between the performance of randomized spreading algorithms when they are analyzed under the new framework or under the classical framework for rumor spreading problem. 
In particular, when the spreading algorithm is \push, we prove that every result obtained in the classical model applies in the introduced model. 
Contrarily to what happens when the spreading algorithm is \pull: in that case, time complexity for rumor spreading increases considerably since the amount of messages stored in the buffer 
grows
rapidly. Particularly, we show that for some class that includes also Caterpillar graphs, i.e., a chain of star graphs, \pull algorithm 
requires at least $\Omega(\delta^D)$ rounds, where $D$ is the diameter of the graph and $\Delta_{max}$ is the degree of internal nodes. 
We also prove upper bounds for general graphs $O(\Delta_{max}\log\Delta_{max}(E_{max} - 1)^D)$ and for regular graphs $O(D^2 \Delta \log \Delta)$, 
where $E_{max} = \max_{v \in V}{\sum_{u \in N(v)} 1/d_u}$, $\Delta_{max}$ is the maximum degree and $\Delta$ is the degree in regular graphs. 




The rest of the document is organized as follows. We review the related work in Section \ref{sec:rel-work}.
In Section \ref{sec:model}, we define notation and give a detailed description of the introduced model.
In Section \ref{sec:push}, we prove the equivalence in terms of time complexity between the classical model and the model with buffers when \push algorithm is used to spread the rumor. 
In Section \ref{sec:pull}, we show crucial difference in terms of time complexity between the classical model and the model with buffers when \pull algorithm is used to spread the rumor. 
Finally, in Section \ref{sec:conclusions}, we provide conclusions and discuss about the future work.

\section{Related work}\label{sec:rel-work}
Time complexity for rumor spreading has been widely studied in the literature. 
Concerning rumor spreading via \push algorithm, time complexity has been studied for complete graphs in \cite{frieze1985shortest,Pittel:1987:SR:37387.37400}
proving that, in probability, the rumor is spread to all vertices within $\log_2n + \ln n + O(1)$ steps. 
Time complexity via \push algorithm in Cayley graphs was studied in \cite{DBLP:conf/stacs/ElsasserS07}. The authors proved a time complexity of $O(\log n)$ to Star graphs, Pancacke graphs and Transposition graphs, among others. Interestingly, they upper bounded the time complexity when \push algorithm spreads the rumor by the mixing time of a random walk in the graph. On the other hand, in \cite{fountoulakis2012rumor}, time complexity for \push protocol is studied for random regular graphs and expanders. Upper bounds for general graphs have been presented in \cite{DBLP:journals/rsa/FeigePRU90} and improved in \cite{DBLP:journals/tcs/ElsasserS09}. Time complexity for rumor spreading via \push algorithm has been studied as well in the asynchronous model in \cite{springerlink:10.1007/s00453-008-9245-4}.

When \pp algorithm is used to spread the rumor, time complexity has been upper bounded via vertex expansion when the graph is regular \cite{DBLP:conf/soda/SauerwaldS11} as well as for general graphs \cite{DBLP:conf/soda/GiakkoupisS12} where it was proved an upper bound of $O(\alpha^{-1}\log^{2.5})$, where $\alpha$ is the vertex expansion of the graph under analysis. On the other hand, time complexity for rumor spreading has been also studied via conductance for complete graphs   \cite{Demers1988,Karp:2000:RRS:795666.796561,Elsasser:2006:CCR:1148109.1148135,DBLP:conf/icalp/DoerrF11}, social networks \cite{DBLP:journals/tcs/ChierichettiLP11}, or general graphs \cite{DBLP:conf/soda/ChierichettiLP10,DBLP:conf/stoc/ChierichettiLP10,DBLP:conf/stacs/Giakkoupis11} where a tight upper bound of $O(\Phi^{-1} \log n)$ was proved, when $\Phi$ denotes the conductance of the graph under analysis. 
In the same work, an upper bound of $O(\Phi^{-1}(1 +\frac{\Delta_{max}}{\Delta_{min}})\log n)$ was proved for \pull algorithm in general graphs. 

Randomized spreading algorithms has slightly modified to substantially improve time complexity for tumor spreading. The authors of  
\cite{DBLP:conf/soda/Censor-HillelS11} used weak conductance to decrease form polynomial to polylogarithmic time complexity in graphs with large weak conductance by adding some determinism in the algorithm to discover bottlenecks. Finally, it was proved in \cite{DBLP:conf/stoc/Censor-HillelHKM12} that a simple modification of the \pp algorithm gives a solution for rumor spreading with time complexity of at most $O(D + \mbox{poly}\log n)$, where $D$ is the diameter of the network, hence, with no dependence on the conductance.  

\section{Model}\label{sec:model}

The network is modeled by an undirected graph, denoted by $G=(V,E)$, 
with the set of $n$ nodes $V$, representing the computing entities, 
and the set of edges $E$, representing point-to-point bi-directional communication links available. 
Let $N(v)$ denote the set of neighbors of node $v$, i.e., the point-to-point bi-directional communication links available incident to node $v$. 
Let $D$ denote the diameter of the graph and $\Delta$ denote the maximum degree of the graph. 
We do not assume any global node or link labeling; instead,
only local link labeling is required to be able to select an outgoing port for a message to be sent
and to identify the incoming point of a received message.
Time is considered to be slotted in {\em synchronized steps}, also called {\em rounds}.

\paragraph{Buffers.}

Every node has a bounded size buffer. We use letter $B$ to denote the size of the buffer. 
$B$ is a parameter of the system, therefore, we might consider $B$ to be large enough so that it can be considered as ``unbounded" from perspective of a finite-time execution. 
In this document, indeed, we study the case when the size of the buffers is ``unbounded".
All messages sent to a node are queued in its buffer. 
Unless stated otherwise, messages are stored according to the FIFO (First-In-First-Out) policy, 
with ties broken either arbitrarily (by some adversary) or uniformly at random. 


\paragraph{Local memory.}

Every node has a bounded local memory. 
In this work we focus on algorithms with a ``small'' memory, i.e., logarithmic size memory. In particular, 
this work focuses on
algorithms
that could store ids of only a constant number of links (e.g., leading to some of its neighbors).
The local memory can store enough information so that the node only knows which of its neighbors has sent a request in the current round, 
and it also has enough space to save the rumor. 

\paragraph{Local steps: Sending, delivering and reading a message.}

Each node runs a given algorithm in consecutive steps.
Every node can read one message per time step from the buffer. 
If the message contains the rumor, the node stores the rumor provided it has not stored the rumor earlier. Otherwise, the rumor is discarded. 
Every node can send one message per time step. 

\paragraph{\push, \pull and \pp spreading algorithms.}\ 

\push algorithm works as follows: if a node has the rumor stored in its memory, it sends a message with the rumor to one neighbor selected uniformly at random among all its neighbors (so called \push action). 
If a node does not have the rumor stored in its memory, it waits until it receives the rumor (in the meantime, it keeps reading messages from its buffer, one reading attempt per step). 

\pull algorithm works symmetrically: If a node does not have the rumor stored in its memory, it sends a 
request to a neighbor selected uniformly at random among all its neighbors (so called \pull request). 
If a node has the rumor stored in its memory, it keeps reading messages from its buffer and answering every \pull request it reads at the same step it reads the request. 
We recall that 
a node can read only one message from its buffer; Hence, even if the request from some node is in its buffer, it does not mean that it will be answered immediately.  

\pp algorithm is a combination of both \push and \pull algorithms. If a node has the rumor stored in its memory, it acts as follows: If it has no pending request, it selects one neighbor uniformly at random among all its neighbors and sends the rumor to that neighbor (\push action). If it has pending requests in memory, it sends the rumor to the 
currently read \pull request from its buffer, if any (answering \pull request).
We describe \pp algorithm for the sake of completeness, though in this document we focus only on the study of \push and \pull algorithms in the newly introduced model with buffers. 

\paragraph{Rumor spreading problem.}

In the beginning of an execution, there is one node that has a rumor; we call that node the 
{\em source node}. 
The goal is to spread the rumor to every node in the network. 

In the execution of an algorithm, a node could be in one of the two states: informed and uninformed.
We say that a node is {\em informed} at a step 
if it is the source node or it has already read a message containing the rumor by the step.
A node that is not yet informed is called {\em uninformed}. 
An uninformed node that has a message containing the source rumor in its buffer
is called {\em nearly-informed}.
Note that a node can locally recognize and remember whether it is informed or not,
but an uninformed node cannot locally check if it is nearly-informed, as this would require
reading messages containing the source rumor from its buffer, 
which will eventually take place, but not faster than one message
per step. 

\paragraph{Objective function.}
Let us denote by $\mathcal{I}^{\mathcal{A}}_t$ the set of informed nodes in an execution of spreading algorithm $\mathcal{A}$ at step $t$. 
The \emph{time complexity of an execution} of spreading algorithm $\mathcal{A}$  is defined as the expected time step $t$ such that $\mathcal{I}^\mathcal{A}_t = V$.
Time complexity for an execution of a spreading algorithm depends on the source node. Hence, we define the 
 \emph{time complexity of spreading algorithm $\mathcal{A}$} 
 for rumor spreading in a graph $G$, denoted by $T^\mathcal{A}(G)$, as the worst time complexity of any execution of spreading algorithm $\mathcal{A}$ in $G$ over all possible source nodes.

\section{\push algorithm: equivalence between classical and buffer models}\label{sec:push}


The main goal of this section is to show that when nodes in a network with buffers 
spread the rumor using \push algorithm, its time complexity is 
equivalent to the time complexity in the classical rumor spreading model 
without buffers.  
The intuition behind it is the fact that, if nodes use only \push algorithm to transmit the rumor, 
uninformed nodes do not transmit any type of messages that slow down (or speed up) 
the whole process by creating queues in the buffers. They wait until they receive the rumor to then start to transmit it.  

\begin{theorem}\label{thm:pushclass=pushbuff}
Let $G$ be a graph and $u \in V$ the source node. Let $\mathcal{E}$ be an execution of \push algorithm that spreads the rumor to all the nodes in $G$ in time $T$ in the classical model. Then, when the model with buffers is considered for the same graph $G$ and the same initial node $u$, the execution $\mathcal{E}$ of the \push algorithm has the same time complexity $T$.
\end{theorem}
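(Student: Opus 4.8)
The plan is to show that, under \push, the two models produce \emph{identical} executions: with the same source $u$ and the same sequence of random neighbor choices prescribed by $\mathcal{E}$, every node becomes informed at exactly the same step in the buffer model as in the classical model. I would prove this by induction on the step number $t$, with the inductive hypothesis that the set of informed nodes after step $t$, which I denote $\mathcal{I}_t$, coincides in both models. Since a node's behaviour under \push is determined solely by whether it is informed (an informed node pushes the rumor to a uniformly random neighbor; an uninformed node only reads its buffer), keeping the informed sets equal guarantees that the same nodes perform pushes at step $t$ and, drawing on the same randomness of $\mathcal{E}$, target the same neighbors in both models.

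The key observation driving the induction is that, under \push, the only messages ever sent are identical copies of the rumor, and they are sent exclusively by informed nodes. Consequently, a node $v$ that remains uninformed through step $t-1$ has received \emph{no} message at all up to that point: the first message ever to reach $v$ is the first push directed at it by some informed neighbor. In particular, $v$'s buffer is empty at the moment this first push (or several simultaneous pushes) arrives. By the empty-buffer rule of the model, one incoming rumor message is then delivered directly to $v$ and read in the same step, so $v$ becomes informed at precisely the step at which it first receives a push. Thus an uninformed node is never merely \emph{nearly-informed} for any positive number of steps; it transitions directly from uninformed to informed, exactly as in the classical model where a pushed neighbor becomes informed in the same step.

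Combining these, the inductive step is immediate. Assuming $\mathcal{I}_{t-1}$ agrees in both models, the informed nodes make the same random choices and hence deliver the same collection of pushes at step $t$. An uninformed node receives a push at step $t$ in one model if and only if it does so in the other, and by the empty-buffer argument it becomes informed at step $t$ in the buffer model exactly when it does in the classical model. Hence $\mathcal{I}_t$ agrees in both models, closing the induction, and the first step at which all nodes are informed, namely $T$, is therefore the same in both.

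I expect the main obstacle to be making the empty-buffer argument fully rigorous, in particular arguing that an uninformed node truly receives nothing before its first push and so never occupies the nearly-informed state for even a single step. This hinges on the fact that uninformed nodes under \push are completely silent, so no spurious traffic (such as the \pull requests present in the \pull protocol) can pre-fill a buffer and delay the reading of the rumor. This is precisely the feature that fails for \pull and is responsible for the slowdown analyzed in Section~\ref{sec:pull}.
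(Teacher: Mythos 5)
Your proposal is correct and follows essentially the same route as the paper's proof: induction on the step number with the invariant that the informed sets coincide in both models, hinging on the observation that under \push only informed nodes ever send messages, so an uninformed node's buffer is empty when the first push arrives and it becomes informed in that very step. If anything, your explicit appeal to the empty-buffer direct-delivery rule (including the case of several simultaneous pushes) makes this key step slightly more rigorous than the paper's own phrasing.
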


\begin{proof}
The proof is by induction. We prove that for all time step $t$, the set of informed nodes in the classical model $\mathcal{I}_t$ is equal to the set of informed nodes in the model with buffers $\mathcal{I}'_t$. 
The first step of the induction holds by the assumption of the Theorem, i.e., $u \in V$ is the node that initially holds the rumor in both cases, the classical model and the model with buffers. Hence, it holds $\{u\}  = \mathcal{I}_0 = \mathcal{I}'_0$. 

Let us assume now that  $\mathcal{I}_t = \mathcal{I}'_t$ for some time step $t$. We will prove that $\mathcal{I}_{t+1}= \mathcal{I}'_{t+1}$. If the set of informed nodes does not change from step $t$ to step $t+1$ in the classical model, it means that in the execution $\mathcal{E}$ every random selection did not select an uninformed node. Hence, since the model with buffers is executing the same execution $\mathcal{E}$ and $\mathcal{I}_t = \mathcal{I}'_t$, it holds that $\mathcal{I}'_{t+1}$ does not differ from $\mathcal{I}'_t$. Therefore, it holds that $\mathcal{I}_{t+1}=\mathcal{I}_t  = \mathcal{I}'_t = \mathcal{I}'_{t+1}$. On the other hand, if some uninformed nodes become informed at time step $t+1$ in the classical model, since the model with buffers is executing the same execution $\mathcal{E}$, then the same nodes receive the rumor in their buffers at the same time step $t+1$. Now, since those nodes were uninformed and the spreading algorithm is \push, the buffers of these nodes did not contain any message (otherwise, the message would have been the rumor and the nodes would have been informed). 
Therefore, these nodes read the rumor at the same step $t+1$ and become informed at the same step $t+1$. In conclusion, $\mathcal{I}_{t+1}= \mathcal{I}'_{t+1}$.  
\qed
\end{proof}

Theorem \ref{thm:pushclass=pushbuff} allows us to state the following corollary of equivalence between classical model and model with buffers when \push is the algorithm used to spread the rumor. 
\begin{corollary}
Every result that holds for the time complexity of rumor spreading in the classical model for \push algorithm also holds for the model with buffers when \push algorithm is used to spread the rumor.
\end{corollary}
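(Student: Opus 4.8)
The plan is to derive the corollary directly from Theorem~\ref{thm:pushclass=pushbuff} by reading that theorem as a \emph{coupling} between the two models. First I would observe that a single execution $\mathcal{E}$ of \push is completely determined by the sequence of uniformly random neighbor selections made by informed nodes, and that this sequence is generated in exactly the same way in the classical model and in the model with buffers: under \push an uninformed node sends nothing, while an informed node picks a uniformly random neighbor irrespective of whether buffers are present. Hence there is a natural measure-preserving correspondence between executions of \push in the two models.

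Next I would invoke Theorem~\ref{thm:pushclass=pushbuff}: under this correspondence, for every fixed execution $\mathcal{E}$ and every source node $u$, the informed sets coincide at every step, so that $\mathcal{I}_t = \mathcal{I}'_t$ for all $t$. Consequently the spreading time of $\mathcal{E}$ in the classical model, namely the first step $t$ at which $\mathcal{I}_t = V$, equals the spreading time of the corresponding execution in the model with buffers. Since these two random variables agree pointwise under a measure-preserving coupling, they have identical distributions. Therefore every distributional quantity attached to the spreading time---its expectation, its tail probabilities, and hence any ``with high probability'' guarantee---is the same in both models.

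Finally I would lift this to the worst-case definition of the time complexity of \push: because that quantity is the worst expected spreading time taken over all choices of source node, and the per-source distributions coincide, the worst-case value coincides as well. Thus any statement proved for the classical model---whether an upper bound, a lower bound, or an exact characterization of the time complexity of \push---transfers verbatim to the model with buffers. The only point requiring any care is the first step, namely the claim that the randomness is generated identically in the two models; once this is granted the remainder is an immediate consequence of Theorem~\ref{thm:pushclass=pushbuff}, so I do not anticipate any substantial obstacle.
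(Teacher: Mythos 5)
Your proposal is correct and follows exactly the route the paper intends: the paper states this corollary without a separate proof, treating it as an immediate consequence of Theorem~\ref{thm:pushclass=pushbuff}, which is precisely your argument. Your write-up merely makes explicit the details the paper leaves implicit---that the random neighbor choices are generated identically in both models (so the per-execution equality of informed sets yields equality in distribution of the spreading times) and that the worst-case-over-sources definition then transfers---which is a faithful, slightly more careful rendering of the same approach.
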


\section{\pull protocol: bounds for the time complexity in the buffer model}\label{sec:pull}


In this Section we give upper bounds on the time complexity of \pull spreading algorithm 
in the case of regular graphs and general graphs.

In order to mark clear differences between the performance of \pull algorithm in the classical model and in the newly introduce model with buffers, we first describe the following example. 
Consider a graph that consists in the concatenation of $d$ $\delta$-stars, where $\delta$-star is a central node connected to $\delta$ leaves. 
More precisely, the concatenation of $d$ $\delta$-stars is the graph where $d$ 
$\delta$-stars are connected by $d-1$ additional edges such that the central nodes form a path of length $d$. 
When the rumor is spread in such a graph via \pull according to the classical model, any informed central node informs in constant number of rounds all its leaves. 
On the other hand, by the coupon collector argument,  
the rumor is transmitted from one central node to the next central node in $O(\delta\log\delta)$ rounds.  Therefore, when the source node is the leftmost central node, $O(d\delta\log\delta)$ rounds are enough to inform the rightmost central node and all the leaves.

If the same situation is analyzed under the model with buffers, we obtain a completely different result. Since leaves have only one neighbor, which is its corresponding central node, they send one request message to the central node in every round while being uninformed. Hence, each central node collects $\delta$ request messages per each round when all its leaves are uninformed. In the best case, the rumor will be transmitted from one central node to the next central node in the round right after the former becomes informed. Even though, the second central node has accumulated $\delta$ more requests in its buffer in that round, which will delay the process of informing the second central node by at least $\delta$ more rounds. 
Therefore, in the best case, when the source node is the leftmost central node, the rightmost central node will become informed after $\Omega(\delta^d)$ steps. This shows a huge gap between the performance of \pull analyzed in the classical model and in the model with buffers, for the same graph.

\subsection{Upper bounds}
For any network $G$, 
we denote by $d_u$ the degree of node $u$ and by $\Delta_{max}$ the maximum degree in the network.
On the other hand, we denote by $E_u$ the expected number of messages that could be received by node $u$ in a single step provided that each of its neighbors transmits a message to one of its own neighbors (randomly selected). 
Let us denote by $E_{max}$ the maximum of such expectations over all nodes. 
\begin{lemma}
Let $G$ be any graph. It holds that  $E_{max} \geq 1$.  
\end{lemma}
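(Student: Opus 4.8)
The plan is to unpack the definition of $E_{\max}$ and show that \emph{some} node $v$ must receive at least one message in expectation. Recall that $E_u = \sum_{w \in N(u)} 1/d_w$, since each neighbor $w$ of $u$ transmits to one of its $d_w$ neighbors chosen uniformly at random, contributing probability $1/d_w$ of hitting $u$; summing over $w \in N(u)$ and using linearity of expectation gives the expected number of messages received by $u$. Since $E_{\max} = \max_{v} E_v$, it suffices to exhibit one node whose expected reception count is at least $1$, or more robustly to show that the \emph{average} of $E_v$ over all nodes is at least $1$, which forces the maximum to be at least $1$ as well.

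First I would compute the sum $\sum_{v \in V} E_v$ by swapping the order of summation. We have
\begin{equation}
\sum_{v \in V} E_v = \sum_{v \in V} \sum_{w \in N(v)} \frac{1}{d_w} = \sum_{w \in V} \frac{1}{d_w} \sum_{v \in N(w)} 1 = \sum_{w \in V} \frac{d_w}{d_w} = \sum_{w \in V} 1 = n,
\end{equation}
where the double-counting step uses the symmetry of the edge relation ($v \in N(w) \iff w \in N(v)$) and the fact that $w$ has exactly $d_w$ neighbors. Thus the total is exactly $n$, the number of nodes.

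The conclusion is then immediate: since $\sum_{v \in V} E_v = n$ and there are $n$ nodes, the average value of $E_v$ is exactly $1$, so the maximum cannot be smaller than the average, giving $E_{\max} \geq 1$. I would note the one edge case worth mentioning, namely that this assumes every node has at least one neighbor so that the degrees $d_w$ are positive and the quantities are well defined; for a connected graph on at least two nodes this holds automatically. I do not anticipate a genuine obstacle here, as the entire argument rests on the standard handshake-style double counting; the only place requiring care is writing $E_u$ explicitly as $\sum_{w \in N(u)} 1/d_w$ directly from the informal definition given in the paper, and then justifying the interchange of summation via the symmetry of adjacency.
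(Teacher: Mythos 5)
Your proof is correct and follows essentially the same route as the paper: both establish the identity $\sum_{v \in V} E_v = n$ by double counting over edges and then conclude $E_{\max} \geq 1$ because the maximum is at least the average (the paper merely phrases this last step as a proof by contradiction, which is a cosmetic difference). Your write-up is in fact a bit cleaner, as it states the definition $E_u = \sum_{w \in N(u)} 1/d_w$ with correct indices, whereas the paper's displayed formula contains an index typo.
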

\begin{proof}
First, we point out the fact that in any graph, it holds the following equality: $\sum_{u \in V} E_u = |V|$. This equality comes from the following. By definition of the expectation, it holds: $E_u = \sum_{u \in N(u)}1/d_u$. Then, in the sum 
$\sum_{u \in V} E_u$ the inverse of the degree of every node appears summed one time per each neighbor of it, which sum up to $1$ for each node. 

Now, let us assume by contradiction that there exists a graph such that $E_{max} < 1$. By definition of $E_{max}$, it holds  $\sum_{u \in V} E_u \leq n E_{max}$. By the assumption,  it holds $\sum_{u \in V} E_u \leq n E_{max} < n$. 
Hence, we obtain a contradiction. \qed
\end{proof}
According to the previous Lemma, 
a given graph is either regular or its corresponding $E_{max}$ is strictly larger than $1$. 

In order to prove the upper bound, we use the following notation.
Let us denote by $T_i$ the expected number of 
steps
such that every node at distance $i$ from the source is informed or nearly-informed.
%
%
\begin{lemma}\label{lem:recursive-formula}
The following recursive formula holds: 
\[
T_{i+1} \le
\left\{
\begin{array}{ll}
T_i + O(\sqrt{T_i} + \Delta_{max}\log\Delta_{max}) & \mbox{ if }  E_{u} -1=0 \mbox{ for all } u \in V  \\ 
T_i (E_{max} - 1) + O(\Delta_{max}\log\Delta_{max}) & \mbox{ if } E_{max}-1 \mbox{ is positive}
\end{array}
\right.
\]
\end{lemma}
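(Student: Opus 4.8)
The plan is to prove the recursion pointwise along the distance layers. I fix a node $v$ at distance $i+1$ from the source and a neighbor $u$ of $v$ at distance $i$, and I bound the expected number of additional steps, counted from the moment all distance-$i$ nodes are informed or nearly-informed, until the rumor enters $v$'s buffer (so that $v$ becomes nearly-informed). Taking the worst case over all distance-$(i+1)$ nodes, each routed through one such distance-$i$ neighbor, then yields $T_{i+1}$. Two events must occur: first $u$ must read the rumor out of its own buffer and become genuinely informed (and hence able to answer), and then $u$ must read one of $v$'s \pull requests and send the rumor. Since in the \pull protocol an uninformed node reads and discards one buffered message per step while an informed node reads and answers one request per step, both delays are controlled by the length of $u$'s buffer, so the core of the argument is a bound on how large that buffer can grow.

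First I would bound the backlog in $u$'s buffer. The drain rate is exactly $1$ per step in either phase. The arrival rate is at most $E_u\le E_{max}$ per step, because each neighbor $w\in N(u)$ contributes an incoming request only while it is uninformed, and then only with probability $1/d_w$, so the expected number of arrivals per step is at most $\sum_{w\in N(u)}1/d_w=E_u$. Hence the expected net growth of the queue is at most $E_u-1$ per step. In the non-regular branch ($E_{max}-1>0$) this gives a backlog of order $(E_{max}-1)T_i$ over the $O(T_i)$-length window during which $u$ has been accumulating requests, producing the $(E_{max}-1)T_i$ term. In the regular branch ($E_u=1$ for all $u$) the drift is zero, so the signed queue length performs a mean-zero random walk reflected at $0$, whose maximum over $T_i$ steps is $O(\sqrt{T_i})$ by a standard second-moment / reflection estimate; this produces the $\sqrt{T_i}$ term.

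Next I would turn the backlog bound into a time bound. Once $u$ is informed it empties the old backlog in FIFO order at one item per step, after which it is reading freshly arrived messages; among these, a request from $v$ is present with probability $1/d_v\ge 1/\Delta_{max}$ per step, and more generally $u$ must serve its at most $\Delta_{max}$ still-uninformed neighbors. Modeling this as a coupon-collector process in which $u$ answers one neighbor per step accounts for the additive $O(\Delta_{max}\log\Delta_{max})$ term: after $O(\Delta_{max}\log\Delta_{max})$ successful reads every neighbor, in particular $v$, has been answered with high probability, and a union bound over the frontier is absorbed into the constant. Combining the drain time with the serving time gives $T_{i+1}\le T_i+O\big(\sqrt{T_i}+\Delta_{max}\log\Delta_{max}\big)$ in the regular case and $T_{i+1}\le (E_{max}-1)T_i+O\big(\Delta_{max}\log\Delta_{max}\big)$ in the non-regular case.

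The main obstacle I anticipate is making the queueing step rigorous rather than heuristic, and in particular obtaining the precise coefficient $(E_{max}-1)$ on $T_i$ (instead of the more immediate $E_{max}$ that a naive "wait $T_i$ for $u$, then drain" bound produces). Three further points need care: the per-step arrivals are not independent and their rate decays as neighbors get answered and stop sending, so I must verify this decay only helps; the FIFO discipline interacts with continuous arrivals, so the informal "drain the backlog, then serve" must be justified by tracking the position of the rumor and of $v$'s first answerable request within the queue; and the regular-case $O(\sqrt{T_i})$ claim requires a genuine concentration statement for the \emph{maximum} of the queue-length walk, not merely its value at a fixed time, since the rumor may sit at any depth in $u$'s buffer. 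I expect the cleanest route is to dominate the true queue by an idealized one with constant arrival rate $E_{max}$ and drain rate $1$, establish the backlog and hitting-time estimates for that dominating process, and transfer them back by a coupling.
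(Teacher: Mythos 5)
Your proposal follows essentially the same route as the paper's proof: a layer-by-layer recursion over the BFS tree in which the backlog in a layer-$i$ node's buffer is controlled by the per-step drift $E_u - 1$ (a mean-zero reflected walk giving the $O(\sqrt{T_i})$ term in the regular case, a linear backlog of order $(E_{max}-1)T_i$ in the positive-drift case), combined with a coupon-collector bound of $O(\Delta_{max}\log\Delta_{max})$ for the informed node to serve its neighbors' requests. The obstacle you flag about the coefficient is genuine but is shared with the paper itself, whose proof simply asserts $T_{i+1}\le T_i(E_{max}-1)+O(\Delta_{max}\log\Delta_{max})$ without accounting for the additive $T_i$ from waiting for the previous layer (harmless when $E_{max}\ge 2$, since then $T_i\le T_i(E_{max}-1)$, but unaddressed for $1<E_{max}<2$); so your attempt matches the paper's argument in both structure and level of rigor, and is in fact more explicit about where the rigor is thin.
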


\begin{proof}
We prove the Theorem by upper bounding the expected time complexity $T_i$ that every node in the $i$-th layer of the BFS tree rooted at the source is informed or nearly-informed (i.e., has the rumor in its buffer). 
Note that the number of messages stored in the buffer of every node may grow in time. 
Nevertheless, once the rumor has been sent to a node (and thus was put to the buffer of that node), 
no message arriving later to the buffer 
could delay the time of reading the rumor by the node from its buffer. 
For every node, we bound the number of messages in the buffer that arrive earlier 
than the rumor by taking into account the layer number of the node. 


In case $E_{max}-1=0$, in expectation there is at most one message arriving per to each node. 
Though, due to the model, if the buffer is not empty, the node processes one message from the buffer at each step. Let us denote by $E(T)$ the expected number of messages queued in the buffer of a node after $T$ steps. 
It is known that the probability of the buffer size being $O(\sqrt{T})$ is constant \cite{petrov1975sums}. 
It follows directly from the fact that the increase of the buffer size in each step can be modeled
by $\Delta$ Bernoulli trials with probability of success $1/\Delta$ minus one,
which in the period of $T$ steps is $O(\sqrt{T})$ with a constant probability (e.g., by using
Chernoff bound). 

Hence, the following recursive formula holds: $T_{i+1} \le T_i + O(\sqrt{T_i}) + O(\Delta\log\Delta)$. 
The $T_i$ part in this formula represents the time required so that every node in the previous layer 
is informed or nearly-informed.
The $O(\sqrt{T_i})$ part in the formula is an upper bound on the maximum size of the buffer at that step,
so that nearly-informed nodes could become informed after reading messages
from their buffers one after another. 
Finally, $O(\Delta\log\Delta)$ is the number of pull requests that any node (in the previous layer) has to answer 
so that each of its neighbors, and in particular - neighbors in the next layer, receives the rumor in its buffer;
this bound follows by the coupon collector argument.


In case $E_{max}-1<0$ the above mentioned expected buffer size is constant.
Hence, the following recursive formula holds: $T_{i+1} \le T_i + O(\Delta\log\Delta)$. 
The $T_i$ part in this formula represents the time required so that every node in the previous layer 
is informed or nearly-informed. Moreover, the $O(\Delta\log\Delta)$ is the number of pull requests that any node (in the previous layer) has to answer 
so that each neighbors in the next layer receives the rumor in its buffer

Finally, in the case when $E_{max}-1>0$ the above mentioned expected buffer size grows at most by a factor of $E_{max}-1$ at each step. 
Then, it holds that $T_{i+1} \leq T_i (E_{max} - 1) + O(\Delta_{max}\log\Delta_{max})$. 
 %
\qed
\end{proof}

\begin{theorem}
\label{t:regular-upper}
The expected number of steps for rumor spreading,
taken by protocol \pull in any synchronous execution on 
any regular networks
of diameter $D$ and degree $\Delta$, is 
$O(D^2 \Delta \log \Delta)$. 
\end{theorem}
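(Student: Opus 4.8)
The plan is to instantiate Lemma \ref{lem:recursive-formula} in the regular case and then solve the resulting recurrence for $i$ running from $0$ up to the diameter $D$. First I would observe that in a $\Delta$-regular graph every node $u$ satisfies $E_u = \sum_{w \in N(u)} 1/d_w = \Delta\cdot(1/\Delta) = 1$, so $E_u - 1 = 0$ for every $u$ and we land precisely in the first branch of the lemma. Hence the recurrence specializes to
\[
T_{i+1} \le T_i + c\bigl(\sqrt{T_i} + \Delta\log\Delta\bigr)
\]
for some absolute constant $c$, with base case $T_0 = 0$, since the source (the unique node at distance $0$) is informed from the start. I will assume $\Delta \ge 2$ throughout, so that $\Delta\log\Delta \ge 1$.

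Next I would solve this recurrence by induction, guessing $T_i \le K\, i^2\,\Delta\log\Delta$ for a suitably large constant $K$. Substituting the hypothesis into the recurrence gives
\[
T_{i+1} \le K i^2 \Delta\log\Delta + c\sqrt{K}\, i\,\sqrt{\Delta\log\Delta} + c\,\Delta\log\Delta .
\]
Since $\Delta\log\Delta \ge 1$ implies $\sqrt{\Delta\log\Delta} \le \Delta\log\Delta$, the two trailing terms are dominated by the slack $2K i\,\Delta\log\Delta + K\,\Delta\log\Delta$ separating $K i^2\Delta\log\Delta$ from $K(i+1)^2\Delta\log\Delta$, provided $K$ exceeds, say, $\max\{c, c^2\}$. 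This closes the induction and yields $T_i = O(i^2\,\Delta\log\Delta)$ for all $i$.

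Applying the bound at $i = D$ shows that every node at distance $D$ from the source is informed or nearly-informed within $O(D^2\,\Delta\log\Delta)$ expected steps. To finish I would convert the final layer from nearly-informed to informed: a nearly-informed node already holds the rumor in its buffer and becomes informed after reading at most a buffer's worth of earlier messages, which by the buffer-size estimate used in Lemma \ref{lem:recursive-formula} is $O(\sqrt{T_D}) = O(D\sqrt{\Delta\log\Delta})$ steps. This additive contribution is dominated by $O(D^2\,\Delta\log\Delta)$, giving the claimed bound.

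The main obstacle I anticipate is handling the $\sqrt{T_i}$ term in the recurrence: a naive linear-in-$D$ ansatz fails, and one must recognize that the square-root increment is of strictly lower order than the per-layer coupon-collector cost $\Delta\log\Delta$ once the correct quadratic guess $T_i = \Theta(i^2\Delta\log\Delta)$ is adopted, so that its accumulated effect over $D$ layers is absorbed into the dominant term. By comparison, the regularity computation $E_u = 1$ and the concluding nearly-informed-to-informed conversion are routine.
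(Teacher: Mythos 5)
Your proposal is correct and follows essentially the same route as the paper: it invokes the first case of Lemma \ref{lem:recursive-formula} since $E_u = 1$ for every node of a regular graph, solves the recurrence $T_{i+1} \le T_i + O(\sqrt{T_i} + \Delta\log\Delta)$ to get $T_D = O(D^2\Delta\log\Delta)$, and then adds the $O(\sqrt{T_D})$ steps needed for nearly-informed nodes at the last layer to read the rumor from their buffers. The only difference is cosmetic: where the paper dismisses the recurrence with a reference to ``standard algebraic calculation,'' you verify the quadratic ansatz $T_i \le K i^2 \Delta\log\Delta$ by an explicit induction, which is a welcome strengthening rather than a departure.
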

\begin{proof}
%


Since the graph is regular, in expectation there is one message arriving per step. 
Therefore, it applies the case $E_{max}-1=0$ in the previous Lemma. 
%
%
Hence, the following recursive formula holds: $T_{i+1} \le T_i + O(\sqrt{T_i}) + O(\Delta\log\Delta)$. 

We set $T_1 = O(\Delta \log \Delta)$.
Indeed, by applying coupon collector argument, by time $\Theta(\Delta \log \Delta)$, in expectation, 
each neighbor of the source will receive the rumor. 
%
%
%
Standard algebraic calculation, c.f., \cite{WolframMathematica}, shows that the solution to the above defined recursive formula is 
$T_{D} = O(D^2\Delta \log \Delta)$.
The expected number of steps after $T_D$ during which nodes nearly-informed at time $T_D$
become informed is $O(\sqrt{T_D})$.
Therefore, the expected number of steps 
so that the rumor is read by all nodes is 
$O(D^2 \Delta \log \Delta)$.
\qed
\end{proof}

\begin{theorem}
\label{t:general-upper}
The expected number of steps for rumor spreading,
taken by protocol \pull in any synchronous execution on 
any  network
of diameter $D$ and maximum degree $\Delta_{max}$, is 
$O(\Delta_{max}\log\Delta_{max}(E_{max} - 1)^D)$. 
\end{theorem}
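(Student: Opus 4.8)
The plan is to apply Lemma~\ref{lem:recursive-formula} iteratively along the layers of the BFS tree rooted at the source, using the third case of the recursion (the one for strictly positive $E_{max}-1$), which is the regime relevant to general non-regular networks. By the preceding lemma, a general graph is either regular (already handled by Theorem~\ref{t:regular-upper}) or has $E_{max}-1>0$, so I would focus on the latter branch where the recursive bound reads $T_{i+1}\le T_i(E_{max}-1)+O(\Delta_{max}\log\Delta_{max})$.

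First I would establish the base case. As in the regular case, the coupon collector argument gives that every neighbor of the source is informed or nearly-informed within $T_1=O(\Delta_{max}\log\Delta_{max})$ expected steps, since the source must answer enough pull requests so that each of its neighbors has pulled the rumor into its buffer. Next I would unroll the linear recurrence. Writing $a=E_{max}-1$ and $c=O(\Delta_{max}\log\Delta_{max})$, the recurrence $T_{i+1}\le a\,T_i+c$ solves in closed form as
\[
T_i \le a^{i-1}T_1 + c\sum_{j=0}^{i-2}a^j = a^{i-1}T_1 + c\,\frac{a^{i-1}-1}{a-1}.
\]
Since $a=E_{max}-1$ is a fixed constant bounded away from $0$ for the graph in question (and in the interesting regime $a>1$), the geometric sum is dominated by its largest term $a^{i-1}$, so $T_i=O\big((E_{max}-1)^i\,\Delta_{max}\log\Delta_{max}\big)$. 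Evaluating at $i=D$, the last BFS layer, yields $T_D=O(\Delta_{max}\log\Delta_{max}\,(E_{max}-1)^D)$, which is exactly the claimed bound.

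Finally, as in the regular case, I would add the tail cost of turning the last layer of nearly-informed nodes into informed nodes: after time $T_D$ those nodes still need to read the rumor out of their buffers, which the lemma bounds by at most the buffer size, a term of lower order than $T_D$ here. Hence the total expected spreading time remains $O(\Delta_{max}\log\Delta_{max}\,(E_{max}-1)^D)$. The main obstacle I anticipate is justifying that the geometric growth in the recurrence genuinely dominates and that lower-order additive buffer-emptying terms do not change the asymptotics when $E_{max}-1$ may be only slightly larger than $1$; I would need to be careful about whether $(E_{max}-1)$ is treated as a constant or allowed to depend on the graph, since the interpretation of the $O(\cdot)$ changes accordingly. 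The other subtlety is cleanly invoking the lemma's third case uniformly across all layers, which implicitly assumes the buffer-growth factor $E_{max}-1$ applies at every layer rather than only at the worst-case node; I would make explicit that bounding each layer's expected buffer growth by the global maximum $E_{max}-1$ is what licenses the clean linear recurrence.
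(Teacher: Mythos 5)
Your proposal is correct and follows essentially the same route as the paper's own (very terse) proof: invoke the $E_{max}-1>0$ case of Lemma~\ref{lem:recursive-formula}, set the base case $T_1=O(\Delta_{max}\log\Delta_{max})$ by coupon collecting at the source, and unroll the linear recurrence $T_{i+1}\le T_i(E_{max}-1)+O(\Delta_{max}\log\Delta_{max})$ to reach $T_D=O(\Delta_{max}\log\Delta_{max}(E_{max}-1)^D)$. You merely make explicit the ``algebraic manipulation'' the paper leaves unstated, and your flagged caveat about the regime $1<E_{max}\le 2$ (where the geometric term does not dominate and the stated bound degenerates) is a legitimate subtlety that applies equally to the paper's own argument.
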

\begin{proof}
In this case, it holds the second case of Lemma \ref{lem:recursive-formula}. Hence, the corresponding recursive formula is $T_{i+1} \le T_i (E_{max} - 1) + O(\Delta_{max}\log\Delta_{max})$. 
We set $T_1 = O(\Delta \log \Delta)$ and by algebraic manipulation we obtain that $T_{D} = O(\Delta_{max}\log\Delta_{max}(E_{max} - 1)^D)$. \qed
\end{proof}

Finally, let us recall the example that opens this Section. In that case, we showed that for the concatenation of $d$ $\delta$-stars \pull algorithm requires at least $\Omega(\delta^D)$ steps to spread the rumor. Note that the upper bound given by Theorem~\ref{t:general-upper} in that case is $O(\delta^{D+1}\log\delta)$.

\section{Conclusions}\label{sec:conclusions}

In this paper we formally introduced a new model for randomized rumor spreading,
which encapsulates realistic behavior of nodes restricted by one operation on the 
message buffer at a time. We demonstrated differences between our model and the
classical model without buffering, which is particularly visible in some non-regular
types of graphs. These kinds of graphs
might be important from practical perspective, as they model hierarchical
topologies with dense end points that could be used in some future technologies
such as the Internet of Things.
Obtaining results for other specific classes of networks and communication tasks is the
most important open direction.

The new model is also very challenging from mathematical perspective, as it 
combines in a non-trivial way different types of random processes, such as message queuing
and rumor spreading. Therefore, tight estimates of rumor spreading time for specific
classes of networks might be much more difficult to obtain than in the classical model. 
New generic methods of transferring results between the two models are also of great
importance.

Finally, considering other variations of the basic protocols, especially \pull but also \pp,
and analyzing their performance in synchronous and asynchronous executions, is another
perspective direction.

%


\end{document}